\documentclass[12pt]{article}

\usepackage{amsmath,amsthm,amssymb}
\usepackage{mathtools}
\usepackage{bm}
\usepackage{natbib}
\usepackage{geometry}
\usepackage{hyperref}
\usepackage{booktabs}
\usepackage{array}
\usepackage{multirow}
\usepackage{setspace}
\usepackage[protrusion=true,expansion=false]{microtype}
\usepackage{xcolor}
\usepackage{enumitem}

\newtheorem{lemma}{Lemma}
\newtheorem{corollary}{Corollary}
\newtheorem{remark}{Remark}
\theoremstyle{definition}

\newtheorem*{example*}{Example}

\newcommand{\bw}{\bm{w}}
\newcommand{\by}{\bm{y}}

\newcommand{\bG}{\bm{G}}
\newcommand{\blam}{\bm{\lambda}}
\newcommand{\bbeta}{\bm{\beta}}

\newcommand{\A}{\mathcal{A}}
\newcommand{\hatT}{\hat{\bm{T}}}
\newcommand{\hatTHT}{\hat{\bm{T}}^{\mathrm{HT}}}
\newcommand{\CI}{\mathrm{CI}}

\usepackage{float}   

\hypersetup{
  colorlinks=true,
  linkcolor=blue!60!black,
  citecolor=blue!60!black,
  urlcolor=blue!60!black
}

\begin{document}

\title{\textbf{Post-Hoc Inference of Cross-Classified Statistics from
       Hierarchical Bayes Survey Weights}}

\author{Siu-Ming Tam\\
        \small Tam Data Advisory, Australia\\
        \small \texttt{stattam@gmail.com}}

\date{Version~27 --- \today}
\maketitle
\thispagestyle{empty}

\begin{abstract}
\citet{tam2026more} shows that combining Bethel multivariate allocation with Hierarchical Bayes (HB) small area models can substantially reduce survey sample sizes while maintaining domain-level precision and near-nominal coverage of posterior credible intervals (CrIs).

This paper extends that framework to cross-classified statistics derived from HB-calibrated unit record data. Its central contribution is a Post-Hoc Inference Engine (PHIE) that propagates uncertainty from HB domain posterior draws to arbitrary cross-tabulations. PHIE transforms each MCMC draw via chi-square calibration to produce replicate survey weights, from which CrIs are obtained.

Three tiers of statistics are identified. Tier 1-E cells reproduce calibration totals and yield exact posterior CrIs. Tier 2 cells involve filtered sums of calibration variables; PHIE alone undercovers, but a Calibrated Bayes interval (CBI), augmenting PHIE with design-based compositional variance, restores near-nominal coverage. Tier 3-NCV cells involve non-calibration variables; a ratio-based CBI linked to a correlated calibration variable achieves reliable coverage even under weak correlation.

A key empirical finding is that uncertainty in cross-tabulations is driven primarily by compositional sampling variability rather than HB model uncertainty. Resulting CBI-based coefficients of variation remain within standard publication thresholds.

\medskip
\noindent\textbf{Keywords:} Calibration; Calibrated Bayes
intervals; Credible intervals; Cross-tabulation; Hierarchical
Bayes; Replicate weights.
\end{abstract}

\newpage
\setcounter{page}{1}

\section{Introduction}

\citet{tam2026more} showed that a survey sample can be reduced substantially by combining Bethel multivariate allocation with Hierarchical Bayes (HB) small area modelling.  In that framework, the HB model produces posterior totals for a set of target variables, and the final mini-max sample is chosen so that national and domain precision requirements continue to be met.  That earlier paper deals with the design and model-based estimation problem.

The present paper deals with the next operational problem.  Once the reduced survey has been fielded, an NSO will not publish only the target variables used in the HB model.  It will usually release a weighted unit record file and produce many cross-classified tables: for example, employment by age group, employment by occupation, hours worked by domain, income by employment status, or health status by occupation.  These tables are often constructed after the survey design and model fitting have been completed.  They may involve variables that were not part of the HB calibration system.  The question is therefore not how to reduce the sample size, but how to make valid statistical inference for cross-classified statistics derived from an HB-calibrated unit record file.

This is the central problem addressed in this paper.  Point estimation is relatively straightforward: one can construct survey weights so that the weighted sums of the HB calibration variables agree with the corresponding HB posterior estimates.  The more difficult issue is uncertainty.  If the HB posterior uncertainty is not propagated to the derived cross tabulations, the published table may look precise simply because it is based on a calibrated point estimate.  This would understate the true uncertainty.  Conversely, if the posterior draws are propagated mechanically without recognising the relationship between the cross-classified cell and the HB calibration variables, the resulting credible intervals may have poor repeated-sampling coverage.

The paper introduces the Post-Hoc Inference Engine (PHIE) to address this problem.  For each MCMC draw of the HB posterior domain totals, PHIE performs a chi-square calibration of the original survey weights to that draw.  This produces a corresponding set of posterior-calibrated replicate weights.  Applying these replicate weights to any cross-classified statistic gives an empirical posterior distribution for that statistic.  The empirical quantiles of this distribution provide a credible interval when the statistic is directly supported by the HB calibration system, and a quasi-posterior interval otherwise.

A key contribution of the paper is to show that not all cross tabulations have the same inferential status.  The paper therefore develops a three-tier taxonomy.  Tier~1-E cells exactly reproduce one of the HB posterior calibration totals and therefore inherit exact posterior credible intervals.  Tier~2 cells sum a calibration variable over a filter.  These cells are closer to the calibration system, but the uncorrected PHIE interval can still undercover; a CBI is therefore developed to restore repeated-sampling coverage.  Tier~3-NCV cells involve a non-calibration outcome variable.  For these cells, the paper proposes a ratio-estimator CBI, linking the non-calibration variable to the calibration variable most strongly associated with it.

The contribution of the paper is thus not the two-stage sample reduction method itself.  That is the subject of \citet{tam2026more}.  The contribution here is the post-hoc inference machinery needed after such a survey has been conducted: posterior-calibrated replicate weights, credible intervals for calibration-supported cross tabulations, CBI corrections for partially supported cross tabulations, and a ratio-estimator CBI for non-calibration variables.  The methods are illustrated using Australian 2021 Census microdata, where population truth is available and Monte Carlo coverage can be assessed directly.

The remainder of the paper is organised as follows.  Section~\ref{sec:framework} summarises the HB survey framework from \citet{tam2026more} and establishes notation.  Section~\ref{sec:phie} develops the PHIE methodology, proves the calibrated-weight and replicate-weight propagation results, and introduces the tier taxonomy.  Section~\ref{sec:results} presents empirical results for all cell types.  Section 5 concludes with recommendations for NSO publication practice.

\section{The HB Survey Framework}
\label{sec:framework}

\subsection{Setting and Notation}

Let $\mathcal{U} = \{1,\ldots,N\}$ denote a finite population of $N$
units partitioned into $H$ non-overlapping strata indexed
$h=1,\ldots,H$, with stratum sizes $N_1,\ldots,N_H$.  A stratified
simple random sample of size $n_h$ is drawn without replacement from
stratum $h$, giving total sample size $n = \sum_h n_h$.  The sample is
denoted $\A$, with Horvitz--Thompson (HT) weights $w_i = N_h/n_h$ for
unit $i$ in stratum $h$.  The population is partitioned into $D$
geographic domains $d=1,\ldots,D$.  There are $V$ target variables
$v=1,\ldots,V$, each with a domain-level precision target
$\mathrm{CV} \leq g_{d,v}$.

For each record $i \in \A$, define the \textbf{calibration design
vector} by stacking $V$ blocks of dimension $D$:
\[
  \by_i
  = \bigl(y_i^{(v,d)}\bigr)_{v=1,\ldots,V;\; d=1,\ldots,D}
  \;\in\; \mathbb{R}^p, \qquad p = V \times D,
\]
where $y_i^{(v,d)} = y_i^{(v)} \cdot \mathbf{1}(i \in d)$ is the
scalar value of target variable $v$ for unit $i$ if $i$ belongs to
domain $d$, and zero otherwise.  Here $y_i^{(v)} \in \mathbb{R}$
denotes the raw scalar measurement of variable $v$ for unit $i$
without domain interaction; since each unit belongs to exactly one
domain, exactly one entry in each $v$-block of $\by_i$ is non-zero.
In the numerical example of Section 4.1, $V=3$, $D=8$,
$p=24$.

\begin{example*}
To make the definition concrete, consider a person $i$ who resides in
NSW ($d=1$), is employed ($y_i^{(1)}=1$), not unemployed
($y_i^{(2)}=0$), and works $38$ hours per week ($y_i^{(3)}=38$).
The calibration design vector $\by_i \in \mathbb{R}^{24}$ has three
non-zero entries:
\[
  \by_i =
  \underbrace{(1,\, 0,\, 0,\, 0,\, 0,\, 0,\, 0,\, 0)^\top}_{\text{Var1 block}\;\in\;\mathbb{R}^8}
  \!\!\oplus\!\!
  \underbrace{(0,\, 0,\, 0,\, 0,\, 0,\, 0,\, 0,\, 0)^\top}_{\text{Var2 block}\;\in\;\mathbb{R}^8}
  \!\!\oplus\!\!
  \underbrace{(38,\, 0,\, 0,\, 0,\, 0,\, 0,\, 0,\, 0)^\top}_{\text{Var3 block}\;\in\;\mathbb{R}^8},
\]
where within each $V$-block the entry corresponding to the person's
domain ($d=1$, NSW) carries the variable value and all other domain
entries are zero.  
\end{example*}

\subsection{Two-Stage Strategy: Bethel Allocation and HB Reduction}

\citet{tam2026more} establishes a two-stage strategy to reduce the
operating cost of surveys without sacrificing precision.
\textbf{Stage~1} uses the R2BEAT package \citep{falorsi2021} to solve
the Bethel multivariate constrained optimisation \citep{bethel1989},
finding the minimum total sample $n^*$ satisfying national CV targets
of $3\%$ and domain CV targets of $8\%$ for all $V \times D$
variable--domain combinations simultaneously,
replacing the common NSO practice of computing Neyman allocations
separately per variable and taking the element-wise maximum --- a
procedure that both inflates cost and fails to guarantee domain
precision \citep{bethel1989}.  \textbf{Stage~2} fits HB models to a
nested sub-sample of size $n < n^*$, exploiting the
strength-borrowing properties of HB to satisfy all precision targets
at the smaller sample.  The optimal reduction fraction is determined
by a four-gate eligibility algorithm covering CV, convergence,
national accuracy, and domain accuracy.

In the synthetic Labour Force Survey simulation of \citet{tam2026more},
the Bethel benchmark is $n^* = 91{,}308$ and the HB-combined sample is
$n^{\mathrm{HB}} = 18{,}262$ (an $80\%$ reduction).  Monte Carlo
results with $B=5{,}000$ replications confirm CI coverage of $93.0\%$
(Employment), $97.7\%$ (Unemployment), $100.0\%$ (Hours Worked), with
CV gate pass rates exceeding $95\%$ for all three variables.

\subsection{HB Models}

For binary variables (e.g. Employment, Unemployment), the logit-normal
binomial HB model of \citet{rao2015} is used:
\begin{equation}\label{eq:binary_hb}
  m_h \mid p_h \;\sim\; \mathrm{Binomial}(n_h, p_h), \qquad
  \mathrm{logit}(p_h) = \bm{z}_h^\top \bbeta + v_h, \qquad
  v_h \;\overset{\mathrm{iid}}{\sim}\; \mathcal{N}(0,\sigma_v^2),
\end{equation}
where $m_h = \sum_{i \in \A_h} y_i^{(v)}$ is the observed count of
successes for target variable $v$ in stratum $h$, and $\bm{z}_h$ is a vector of
stratum-level auxiliary covariates.
The prior on the random-effects variance is
$\sigma_v^2 \sim \mathrm{Inv}\text{-}\chi^2(\nu, s^2)$, with
hyperparameters $(\nu, s^2)$ calibrated via a grid search that
maximises CI coverage while minimising domain maximum average relative
error (MARE) \citep{tam2026more}.

For the continuous variable (e.g. Hours Worked), the Gaussian
Fay--Herriot model \citep{fayherriot1979} is used:
\begin{equation}\label{eq:gaussian_hb}
  \hat{\theta}_h \mid \theta_h \;\sim\; \mathcal{N}(\theta_h, \psi_h),
  \qquad
  \theta_h = \bm{z}_h^\top \bbeta + v_h, \qquad
  v_h \;\overset{\mathrm{iid}}{\sim}\; \mathcal{N}(0,\sigma_v^2),
\end{equation}
where the sampling variance $\psi_h = \mathrm{DEFF}_h \cdot
(1-f_h)S_h^2/n_h$ is treated as known, incorporating the design
effect, the finite population correction, and the within-stratum
sample variance.  Both models are estimated using the R package
\texttt{mcmcsae} \citep{boonstra2021}.  Three independent MCMC chains
are run; convergence is assessed via the Gelman--Rubin statistic
\citep{gelman1992}.  In the simulated application of \citet{tam2026more},
$\hat{R}_{\max} = 1.002$ across all three variables, indicating
excellent convergence.

\section{The Post-Hoc Inference Engine}
\label{sec:phie}

\subsection{Setup}
\label{sec:setup}

Let $\bm{T} = (T^{(v,d)})_{v,d} \in \mathbb{R}^p$ denote the random
$p$-vector of domain totals with posterior distribution
$\Pi(\bm{T} \mid \text{data})$ arising from the HB model,
$\mathbb{E}[\bm{T} \mid \text{data}]
        \;\in\; \mathbb{R}^p$
be the vector of HB posterior means, 
$\hatT = \bigl(\hat{T}^{(v,d)}\bigr)_{v,d}$ denote their estimates, 
$\{\hatT^{(b)}\}_{b=1}^{B}$ denote $B$ MCMC draws from
$\Pi(\bm{T} \mid \text{data})$, and $\hatTHT = \sum_{i \in \A} w_i
\by_i$ denote the corresponding vector of HT estimates.

Let $\mathbf{Y}$ denote the $n \times p$ calibration design
matrix whose $i$-th row is the transpose of the calibration design vector, $\by_i^\top$, :
\[
  \mathbf{Y}
  = \begin{pmatrix}
      \by_1^\top \\ \by_2^\top \\ \vdots \\ \by_n^\top
    \end{pmatrix}
  \in \mathbb{R}^{n \times p},
\]
and let $\mathbf{W} = \mathrm{diag}(w_1, \ldots, w_n)$ be the
diagonal matrix of Horvitz--Thompson weights.  Define the
\textbf{calibration instrument cross-product matrix}
\begin{equation}\label{eq:G}
  \bG
  = \mathbf{Y}^\top \mathbf{W} \mathbf{Y}
  = \bigl(\mathbf{W}^{1/2}\mathbf{Y}\bigr)^\top
    \bigl(\mathbf{W}^{1/2}\mathbf{Y}\bigr)
  = \sum_{i \in \A} w_i \by_i \by_i^\top
  \;\in\; \mathbb{R}^{p \times p}.
\end{equation}

\begin{remark}[Rank of $\bG$]\label{rem:rank}
Since $\bG = (\mathbf{W}^{1/2}\mathbf{Y})^\top
(\mathbf{W}^{1/2}\mathbf{Y})$ is a Gram matrix, it is symmetric
positive semi-definite for any $\mathbf{Y}$ and any $\mathbf{W}$ with
positive diagonal entries.  Its rank satisfies
\[
  \mathrm{rank}(\bG)
  \;=\; \mathrm{rank}\!\bigl(\mathbf{W}^{1/2}\mathbf{Y}\bigr)
  \;=\; \mathrm{rank}(\mathbf{Y}),
\]
the last equality holding because $\mathbf{W}^{1/2}$ is an
$n \times n$ diagonal matrix with strictly positive diagonal entries.
Consequently $\bG$ is positive definite --- and $\bG^{-1}$ exists ---
if and only if $\mathbf{Y}$ has rank $p$, which requires
at least $p$ linearly independent calibration design vectors among the
$n$ sampled records.  In a survey setting where $n \gg p$, this condition is almost always
satisfied in practice.  We make this assumption throughout the sequel.
\end{remark}

\subsection{Posterior-Calibrated Weights and CrIs for Cross Classified Statistics}

\begin{lemma}[Posterior-Calibrated Weights]\label{lem:cal}
Let $\bm{t} \in \mathbb{R}^p$ be a target vector.
Suppose $\bG$ is full rank.  The solution to the
constrained optimisation problem
\begin{align}
  \min_{w'_i,\; i \in \A} \quad
  & \sum_{i \in \A} \frac{(w'_i - w_i)^2}{w_i}
  \label{eq:obj} \\[4pt]
  \text{subject to} \quad
  & \sum_{i \in \A} w'_i \by_i = \bm{t}
  \label{eq:constraint}
\end{align}
is
\begin{equation}\label{eq:sol}
  w'_i(\bm{t}) \;=\; w_i \Bigl(1 +
  \bigl(\bm{t} - \hatTHT\bigr)^\top \bG^{-1}\, \by_i
  \Bigr),
  \qquad i \in \A.
\end{equation}

\end{lemma}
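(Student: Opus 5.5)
The plan is to solve the equality-constrained quadratic program \eqref{eq:obj}--\eqref{eq:constraint} by Lagrange multipliers, and then argue that the stationary point is the unique global minimiser.

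First, introduce a multiplier vector $\blam \in \mathbb{R}^p$ and form the Lagrangian
\[
  L(\bw',\blam) = \sum_{i\in\A} \frac{(w'_i-w_i)^2}{w_i} - 2\blam^\top\Bigl(\sum_{i\in\A} w'_i\by_i - \bm{t}\Bigr),
\]
where the factor $2$ is chosen for convenience. Differentiating in $w'_i$ and setting the result to zero gives $w'_i = w_i(1+\blam^\top\by_i)$ for every $i\in\A$.

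Second, substitute this form into the constraint \eqref{eq:constraint}:
\[
  \sum_{i\in\A} w_i\by_i + \sum_{i\in\A} w_i\by_i\by_i^\top\blam = \bm{t}.
\]
By definition of $\hatTHT$ and of $\bG$ in \eqref{eq:G}, this reads $\hatTHT + \bG\blam = \bm{t}$. Under the full-rank assumption (Remark~\ref{rem:rank}), $\bG$ is invertible, so $\blam = \bG^{-1}(\bm{t}-\hatTHT)$. Because $\bG$ is symmetric, $\blam^\top\by_i = (\bm{t}-\hatTHT)^\top\bG^{-1}\by_i$, which yields exactly \eqref{eq:sol}.

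Third, check that this stationary point is indeed the minimiser and that it is unique. The objective is strictly convex in $\bw'$, since its Hessian is $\mathrm{diag}(2/w_i)$, which is positive definite because every $w_i>0$. The feasible set is an affine subspace. It is nonempty because the constructed $\bw'$ satisfies the constraint; this is where full rank of $\bG$ is needed, since it implies $\mathbf{Y}$ has rank $p$. A strictly convex function on a nonempty affine set has at most one minimiser, and for convex problems with linear constraints the KKT (Lagrange) conditions are sufficient. Hence \eqref{eq:sol} is the unique solution.

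For a self-contained check that avoids appealing to KKT theory, write any feasible $\bw'' = \bw' + \bm{\delta}$, so that $\sum_i \delta_i\by_i = \bm{0}$. Expanding the objective gives
\[
  \sum_i \frac{(w''_i-w_i)^2}{w_i} = \sum_i \frac{(w'_i-w_i)^2}{w_i} + 2\sum_i \delta_i\,\blam^\top\by_i + \sum_i \frac{\delta_i^2}{w_i}.
\]
The cross term equals $2\blam^\top\sum_i\delta_i\by_i = 0$, and the last term is strictly positive unless $\bm{\delta}=\bm{0}$. This confirms both optimality and uniqueness.

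No step is a genuine obstacle. The only care needed is in the bookkeeping that identifies $\sum_i w_i\by_i$ with $\hatTHT$ and $\sum_i w_i\by_i\by_i^\top$ with $\bG$, and in noting that invertibility of $\bG$ is exactly what makes the multiplier equation solvable.
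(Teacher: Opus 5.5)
Your proposal is correct and follows the paper's Lagrange-multiplier derivation; your factor of $2$ in the Lagrangian only rescales the paper's multiplier, which in the paper comes out as $\blam = 2\,\bG^{-1}(\bm{t}-\hatTHT)$. Your third step (strict convexity, plus the perturbation identity showing the cross term $2\blam^\top\sum_i \delta_i \bm{y}_i$ vanishes) goes further than the paper by proving that the stationary point is the unique global minimiser, which the paper's proof assumes without argument.
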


\begin{proof}
The Lagrangian for \eqref{eq:obj}--\eqref{eq:constraint} is
\[
  \mathcal{L}(\bw', \blam)
  = \sum_{i \in \A} \frac{(w'_i - w_i)^2}{w_i}
  - \blam^\top\!\left(\sum_{i \in \A} w'_i \by_i - \bm{t}\right),
\]
where $\blam \in \mathbb{R}^p$ is the vector of Lagrange multipliers.

\medskip
\noindent\
Differentiating with respect to $w'_i$ and setting to zero:
\[
  \frac{2(w'_i - w_i)}{w_i} - \blam^\top \by_i = 0
  \quad\Longrightarrow\quad
  w'_i = w_i\!\left(1 + \tfrac{1}{2}\,\blam^\top \by_i\right).
\]

\medskip
\noindent\
Substituting into constraint~\eqref{eq:constraint}:
\[
  \hatTHT + \tfrac{1}{2}\,\bG\,\blam = \bm{t}
  \quad\Longrightarrow\quad
  \blam = 2\,\bG^{-1}\bigl(\bm{t} - \hatTHT\bigr).
\]

\medskip
\noindent\
Substituting $\blam$ back:
\[
  w'_i = w_i\!\left(1 +
  \bigl(\bm{t} - \hatTHT\bigr)^\top \bG^{-1}\, \by_i
  \right),
\]
which is \eqref{eq:sol}.

\end{proof}

\begin{corollary}[Calibration Property]\label{cor:cal}
The weights $w'_i(\bm{t})$ given by \eqref{eq:sol} satisfy the
calibration constraints exactly for any target $\bm{t}$:
\[
  \sum_{i \in \A} w'_i(\bm{t})\, y_i^{(v)}\, \mathbf{1}(i \in d)
  = t^{(v,d)}
  \quad \forall\, v=1,\ldots,V,\; d=1,\ldots,D.
\]
\end{corollary}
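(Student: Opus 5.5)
The plan is to verify the vector identity $\sum_{i\in\A} w'_i(\bm{t})\,\by_i = \bm{t}$ by direct substitution of \eqref{eq:sol}. The componentwise statement then follows by reading off the $(v,d)$ coordinate, because the $(v,d)$ entry of $\by_i$ is $y_i^{(v)}\mathbf{1}(i\in d)$ by definition of the calibration design vector. Lemma~\ref{lem:cal} already derives \eqref{eq:sol} from the constraint. Still, I will check the constraint directly rather than rely on the Lagrangian argument: the corollary asserts it for the explicit formula, so a self-contained verification is cleaner.

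Step one is to expand the sum. Since $(\bm{t}-\hatTHT)^\top\bG^{-1}\by_i$ is a scalar, it equals its transpose $\by_i^\top\bG^{-1}(\bm{t}-\hatTHT)$, using the symmetry of $\bG^{-1}$, which holds because $\bG$ is symmetric by Remark~\ref{rem:rank}. Hence
\[
\sum_{i\in\A} w'_i(\bm{t})\by_i = \sum_{i\in\A} w_i\by_i + \Bigl(\sum_{i\in\A} w_i\by_i\by_i^\top\Bigr)\bG^{-1}\bigl(\bm{t}-\hatTHT\bigr).
\]

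Step two identifies the two pieces. The first is $\hatTHT$ by definition, and the bracketed matrix is $\bG$ by \eqref{eq:G}. The right-hand side is therefore $\hatTHT + \bG\bG^{-1}(\bm{t}-\hatTHT) = \bm{t}$. The inverse exists under the full-rank assumption adopted after Remark~\ref{rem:rank}, and no restriction on $\bm{t}$ is needed, which gives the "for any target" claim.

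The only point needing care is the transposition step: the scalar must be rewritten so that $\by_i$ sits on the left, producing the outer product $\by_i\by_i^\top$. This works because $\bG^{-1}$ is symmetric. Otherwise the argument is routine linear algebra, and taking the $(v,d)$ component finishes the proof.
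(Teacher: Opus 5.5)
Your proposal is correct and follows the same direct-substitution argument as the paper, which writes $\sum_{i\in\A} w'_i(\bm{t})\by_i = \hatTHT + \bG\bG^{-1}(\bm{t}-\hatTHT) = \bm{t}$ and then reads off the $(v,d)$ coordinate. You also spell out a step the paper leaves implicit. Turning the scalar into $\by_i^\top\bG^{-1}(\bm{t}-\hatTHT)$ to form the outer product uses the symmetry of $\bG$, and you justify that correctly.
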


\begin{proof}
Direct substitution of \eqref{eq:sol} into \eqref{eq:constraint}:
\[
  \sum_{i \in \A} w'_i(\bm{t})\, \by_i
  = \hatTHT + \bG \cdot \bG^{-1}\bigl(\bm{t} - \hatTHT\bigr)
  = \bm{t}. \qquad\square
\]
\end{proof}

\begin{remark}[Connection to GREG and coverage implications]\label{rem:greg}
Applying Lemma~\ref{lem:cal} with $\bm{t} = \hatT$ yields the
posterior-mean calibrated weights used for point estimation; applying
it with $\bm{t} = \hatT^{(b)}$ yields the replicate weights used in
Lemma~\ref{lem:phie} below.  The solution \eqref{eq:sol} is the
generalised regression (GREG) estimator weight of \citet{deville1992},
applied with the chi-square distance metric and with calibration targets
equal to $\bm{t}$ rather than known population totals.  The factor
$g_i(\bm{t}) = 1 + (\bm{t} - \hatTHT)^\top \bG^{-1}\, \by_i$ is
the familiar $g$-weight of calibration estimation.

This connection has a direct implication for coverage.  Under the
design distribution, GREG calibration reduces the variance of weighted
estimators for variables correlated with the calibration system
\citep{deville1992}.  Here, however, the target $\bm{t}$ is a draw
from the HB posterior rather than a fixed population total, so the
relevant notion of ``good coverage'' is posterior rather than
frequentist.  For Tier~1-E cells the two notions coincide: the
calibration property (Corollary~\ref{cor:cal}) guarantees that the
PHIE CrI exactly reproduces the HB posterior interval regardless of
design properties.  For Tier~2 and Tier~3 cells the two notions
diverge: the PHIE map propagates only the posterior uncertainty in
the domain totals $\bm{T}$, while the design-based sampling
variability of the within-domain cell shares is left unaddressed.
The CBI correction in Section~3.3.4 bridges this gap by adding
the design-based Component~1 to the model-based Component~2.
\end{remark}

\begin{remark}[Single unified weight set]\label{rem:single}
A key feature of Lemma~\ref{lem:cal} is that a \textnormal{single} weight
$w'_i(\bm{t})$ is attached to each unit record, simultaneously
satisfying all $p = V \times D$ calibration constraints.  There is no
need for separate weight sets for each of the V target variables.  cross tabulations of a target variable by an auxiliary
attribute (e.g.\ employment by sex) will have domain marginals that
agree with the calibration target $\bm{t}$, while the within-domain
breakdown by the auxiliary attribute reflects the sample composition
and is not itself calibrated.
\end{remark}.

\begin{lemma}[Post-Hoc Inference Engine]\label{lem:phie}
Let $\bm{T}$, $\hatT$, and $\{\hatT^{(b)}\}_{b=1}^{B}$ be as defined
in Section~\ref{sec:setup}.  For any statistic expressible as
\[
  S = \phi(\bm{T}) := f\!\bigl(\by,\, \bw'(\bm{T})\bigr),
\]
where $\bw'(\bm{T})$ denotes the calibrated weights of
Lemma~\ref{lem:cal} evaluated at target $\bm{T}$, define the
replicate statistic at draw $b$ as
\[
  S^{(b)} = \phi\!\bigl(\hatT^{(b)}\bigr)
           = f\!\bigl(\by,\, \bw'(\hatT^{(b)})\bigr).
\]
Then
\begin{equation}\label{eq:ci}
  \CI_{95\%}(S)
  = \Bigl[Q_{0.025}\!\bigl\{S^{(b)}\bigr\},\;
          Q_{0.975}\!\bigl\{S^{(b)}\bigr\}\Bigr]
\end{equation}
is a $95\%$ credible interval for $S$ under the posterior of
$\phi(\bm{T})$ induced by $\Pi(\bm{T} \mid \mathrm{data})$, with
the empirical quantiles converging to the true posterior quantiles
as $B \to \infty$.  The point estimate of $S$ is
$\hat{S} = \phi(\hatT) = f\bigl(\by, \bw'(\hatT)\bigr)$.
\end{lemma}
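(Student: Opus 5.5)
The argument is a push-forward (change-of-variables) argument followed by a standard Monte Carlo quantile-consistency argument. Throughout, condition on the realised sample $\A$, so that $\by_i$, $w_i$, $\bG$ and $\hatTHT$ are fixed.

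\textbf{Step 1: $\phi$ is a measurable function of $\bm{T}$.} By Lemma~\ref{lem:cal}, with $\bG$ of full rank (Remark~\ref{rem:rank}), the map $\bm{t}\mapsto \bw'(\bm{t})$ is affine and hence continuous. If $f$ is assumed Borel measurable in its weight argument, then $\phi=f(\by,\bw'(\cdot))$ is a measurable map $\mathbb{R}^p\to\mathbb{R}$. Typical cross-tabulation cells are linear in the weights, $\sum_i w'_i a_i$, and ratios of such sums are continuous away from zero denominators, so this covers the statistics of interest.

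\textbf{Step 2: transfer the posterior.} The posterior of $S=\phi(\bm{T})$ is then well defined as the push-forward $\Pi_S=\Pi(\cdot\mid\text{data})\circ\phi^{-1}$. Since $\hatT^{(b)}$ are draws from $\Pi(\bm{T}\mid\text{data})$, each $S^{(b)}=\phi(\hatT^{(b)})$ is a draw from $\Pi_S$. So the replicate statistics are, by construction, a sample from the induced posterior of $S$, and the point estimate $\hat S=\phi(\hatT)$ is simply $\phi$ evaluated at the posterior-mean target.

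\textbf{Step 3: quantile convergence.} Let $F_S$ be the posterior cdf of $S$ and $\hat F_B$ the empirical cdf of $\{S^{(b)}\}$. For i.i.d.\ draws, Glivenko--Cantelli gives $\sup|\hat F_B-F_S|\to0$ almost surely. For an ergodic MCMC chain, the ergodic theorem applied to the indicators $\mathbf{1}(S^{(b)}\le s)$, together with the standard Glivenko--Cantelli extension for stationary ergodic sequences, gives the same conclusion. It follows that $Q_\alpha\{S^{(b)}\}\to F_S^{-1}(\alpha)$ at every $\alpha$ where $F_S$ is continuous and strictly increasing at its $\alpha$-quantile. Under that condition at $\alpha=0.025$ and $\alpha=0.975$, the limiting interval has posterior mass $F_S(q_{0.975})-F_S(q_{0.025})=0.95$, which makes it an equal-tailed 95\% credible interval for $S$ under $\Pi_S$.

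\textbf{Main obstacle.} The main difficulty is Step 3 in the MCMC setting. The draws are dependent, and $\Pi_S$ may have atoms; for example, $\phi$ can be constant in directions of $\bm{T}$ that the cell does not load on. I would handle this by assuming the chains are ergodic with respect to the posterior, and by stating the convergence at continuity points of $F_S$, or alternatively as convergence of the interval's limit points.

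A consistency check for Tier~1-E cells comes from Corollary~\ref{cor:cal}. There $\phi(\bm{T})=T^{(v,d)}$ exactly, so $\Pi_S$ is the HB marginal posterior itself and the interval reproduces the HB CrI. For Tier~2 and Tier~3 cells the lemma makes no claim about frequentist coverage.
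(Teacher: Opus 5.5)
Your proposal is correct and follows the same route as the paper. Conditional on $\A$, the map $\bm{t}\mapsto\bw'(\bm{t})$ is affine with fixed coefficients, so each $S^{(b)}=\phi(\hatT^{(b)})$ is a draw from the push-forward of $\Pi(\bm{T}\mid\text{data})$, and the empirical quantiles of these draws estimate the posterior quantiles. Your version is tighter than the paper's: the paper invokes the continuous mapping theorem where measurability of $\phi$ is all the push-forward step needs, and it asserts quantile consistency without the conditions you make explicit (ergodicity of the chains, and $F_S$ continuous and strictly increasing at the $0.025$ and $0.975$ quantiles).
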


\begin{proof}
Lemma~\ref{lem:cal} establishes the affine map
\[
  \bm{t} \;\longmapsto\; w'_i(\bm{t})
  = w_i\!\left(1 +
    \bigl(\bm{t} - \hatTHT\bigr)^\top \bG^{-1}\, \by_i
    \right),
  \qquad i \in \A,
\]
whose coefficients $\bG^{-1}$ and $\hatTHT$ are determined entirely
by the fixed observed sample $\A$.  At each MCMC draw $b$, only
$\hatT^{(b)}$ varies; $\bG$ and $\hatTHT$ are unchanged because the
MCMC sampler draws the model parameters $(\bbeta^{(b)}, v_h^{(b)},
\sigma_v^{2(b)})$ from the posterior and computes $\hatT^{(b)}$ via
the linking model, leaving $\A$ untouched.  Since
$\hatT^{(b)} \sim \Pi(\bm{T} \mid \text{data})$, the replicate
weight vector $\bw'^{(b)} = \bw'(\hatT^{(b)})$ is a draw from the
pushforward of $\Pi(\bm{T} \mid \text{data})$ through this map.  By the continuous mapping theorem \citep{vandervaart1998},
$S^{(b)} = \phi(\hatT^{(b)})$ is a draw from the posterior of
$\phi(\bm{T})$ induced by $\Pi(\bm{T} \mid \text{data})$, and the
empirical quantiles of $\{S^{(b)}\}_{b=1}^B$ consistently estimate
the corresponding posterior quantiles as $B \to \infty$.
\end{proof}
\begin{remark}[Scope of the posterior claim]\label{rem:scope}
The interval \eqref{eq:ci} is a credible interval for the induced
posterior of the post-hoc calibrated functional $S = \phi(\bm{T})$,
propagating the posterior uncertainty in the HB domain totals $\bm{T}$
to any derived statistic through the calibration map.  It is not a
claim about the full Bayesian posterior of $f(\by, \bw')$ over all
sources of uncertainty, because $\mathcal{A}$, $\by_i$, and $w_i$ are
treated as fixed observed data throughout.  Whether the induced
posterior coincides with the full posterior — and the consequences for
credible interval coverage — depends on the relationship between the
statistic and the calibration system, as discussed in
Section~\ref{sec:taxonomy}.
\end{remark}

\subsection{Three-Level Taxonomy for cross tabulations}
\label{sec:taxonomy}
The scope restriction identified in Remark~\ref{rem:scope} has direct
consequences for the coverage of PHIE CrIs.  The PHIE
map treats the sample $\A$, the calibration design vectors $\by_i$,
and the design weights $w_i$ as fixed, propagating only the posterior
uncertainty in the domain totals $\bm{T}$.  For statistics whose
uncertainty is driven entirely by the posterior of $\bm{T}$ --- namely
cross tabulation cells that equal one of the $p$ calibration
constraints exactly --- the PHIE CreI is exact and coverage is nominal.
For all other statistics, the PHIE map does not capture the full
posterior uncertainty for two broad reasons.  As explained in the sequel, for Tier~2 and Tier~3 cells,
PHIE ignores the sampling
variability in the within-domain cell shares of the outcome variable,
i.e. Component~1 of the total variance (referred to as equation (10) below), and propagates only the domain-total uncertainty, i.e. Component~2.

\medskip
We restrict attention to \textnormal{cross tabulation statistics} of the
form
\[
  T_c = \sum_{i \in c} y_i^{(v)} w'_i,
  \qquad
  c = \bigl\{i \in \A : x_i \in \mathcal{C}\bigr\},
\]
where $v$ indexes the variable being summed:
$v \in V_1 = \{1,\ldots,V\}$ denotes a calibration variable
(e.g.\ Employment) and $v \in V_2 = \{V{+}1,\ldots,V'\}$ denotes
a non-calibration outcome variable (e.g.\ Health Condition, Income
Band); $y_i^{(v)} \in \mathbb{R}$ is the scalar value of variable
$v$ for unit $i$;
$x_i$ denotes the auxiliary attributes of unit record $i$ (e.g.\ Age Group,
Sex, Occupation Group, Hours Band, etc.), and $\mathcal{C}$ is a fixed
set of attribute values defining the cell.  The cell membership of
each record is fixed and does not depend on the MCMC draw $b$.  Three
levels arise naturally from the relationship between $(v, \mathcal{C})$
and the HB calibration system.

\subsubsection*{3.3.1 Tier 1-E: Exact Posterior CI}

A cell $c$ is a \textnormal{calibration-constraint cell} when its total
equals one of the $p = V \times D$ calibration constraints exactly,
with no further attribute filter beyond domain membership:
\[
c = \{ i \in \mathcal{A} : x_i \in \mathcal{C},\ i \in d \}, \qquad
T_c = \sum_{i \in d} y_i^{(v)}\, w'_i = \hat{T}^{(v,d)}, \quad
x_i = y_i^v, \quad v \in V_1
\]
That is, $\mathcal{C}$ is defined by $v \in V_1$ and domain $d$, and $y^{(v)}$
is summed over \textnormal{all} records in that domain.  By the calibration
property (Corollary~\ref{cor:cal}), $T_c^{(b)} = \hat{T}^{(v,d),(b)}$
for every draw $b$, so the PHIE CI is identical to reading the CreI
directly from the MCMC output.  Coverage is $100\%$ in the single run
and nominally $95\%$ under repeated sampling.

Any additional filter on auxiliary attributes, such as restricting
to a particular Hours band, Age or Sex within domain $d$, moves the cell out of Tier~1-E and into Tier~2-CA (e.g. Hours Band) and Tier~2-NCA (e.g. Age, Sex), and requires a different method to assess uncertainty.

\subsubsection*{3.3.2 Tier 2-CA: Cells Filtered by Calibration Attribute }

A cell $c$ is a \textnormal{cross-cutting calibration-attribute cell} when
the variable $y^{(v)}$ is a calibration variable
($v \in V_1 $) and $\mathcal{C}$ is defined by a
calibration attribute $x_i$, i.e. $x_i$ is itself derived
from a calibration variable, e.g.\ Employment cross tabulated by
Hours Band.  The cell $c = \{i \in \A : x_i \in \mathcal{C}\}$
cuts across the domain structure.

\subsubsection*{3.3.3 Tier 2-NCA: Cells Filtered by Non-Calibration Attribute }

A cell $c$ is a \textnormal{cross-cutting non-calibration-attribute cell}
when $y^{(v)}$ is a calibration variable but $\mathcal{C}$ is defined
by a non-calibration attribute, i.e. $x_i \in \mathcal{C}$
does not correspond to any of the calibration variables in $V_1$, e.g.\ Employment cross tabulated by
Occupation Group.

\subsubsection*{3.3.4 Common Structure and Total Variance Correction for
  Tier~2-CA and Tier~2-NCA Cells}

Both Tier~2-CA and Tier~2-NCA cells sum a calibration variable $y^{(v)}$
over a filter.  Their mathematical structure is
identical, so the analysis below applies to both.

Since $c$ spans multiple domains, the cell total decomposes as
\[
  T_c = \sum_{d=1}^{D} T_{c,d}, \qquad
  T_{c,d} = \sum_{i \in c \cap d} y_i^{(v)}\, w'_i
           = \hat{\lambda}_{d,c}\, \hat{T}^{(v,d)},
\]
where
\begin{equation}\label{eq:lambda}
  \hat{\lambda}_{d,c}
  = \frac{\displaystyle\sum_{i \in c \cap d} y_i^{(v)}\, w'_i}
         {\hat{T}^{(v,d)}}
\end{equation}
is the \textnormal{estimated} calibrated sample share of cell $c$ within
domain $d$, computed once from the posterior-mean calibrated weights
$w'_i = w'_i(\hatT)$ and held fixed across draws.  Let
$\lambda_{d,c}$ denote the corresponding true population share.  At
MCMC draw $b$, the replicate weight $w^{\prime(b)}_i =
w'_i(\hatT^{(b)})$ is affine in $\hatT^{(b)}$, so the replicate cell
total is
\begin{equation}\label{eq:t2rw_decomp}
  T_c^{(b)}
  = \underbrace{\sum_{i \in c} y_i^{(v)}\, w_i}_{\text{fixed HT total}}
  \;+\;
  \bigl(\hatT^{(b)} - \hatTHT\bigr)^\top
  \underbrace{\bG^{-1} \sum_{i \in c} y_i^{(v)}\, w_i\, \by_i}_{=:\;\bm{a}_c \;\in\; \mathbb{R}^p},
\end{equation}
which can also be written as
\[
  T_c^{(b)} = \sum_{d=1}^{D} \hat{\lambda}_{d,c}\, \hat{T}^{(v,d),(b)},
\]
propagating the posterior uncertainty in the $p$ domain totals
through the fixed estimated shares $\hat{\lambda}_{d,c}$.

The quasi-posterior CI thus captures the uncertainty in the domain
totals $\hat{T}^{(v,d)}$ but not the uncertainty in the shares
$\lambda_{d,c}$ themselves.  By the law of total variance, the exact
decomposition is:
\begin{equation}\label{eq:totalvar}
  \mathrm{Var}(T_c \mid \text{data})
  = \underbrace{E\bigl[\mathrm{Var}(T_c \mid \bm{T})\bigr]}_{\text{Component 1: compositional}}
  \;+\;
  \underbrace{\mathrm{Var}\bigl[E(T_c \mid \bm{T})\bigr]}_{\text{Component 2: domain}},
\end{equation}
where Component~2 $= \sum_d \lambda_{d,c}^2\, \hat{V}_d$, with
$\hat{V}_d = \mathrm{Var}(\hat{T}^{(v,d)} \mid \text{data})$, the HB
posterior variance for domain $d$, involves the true population share
$\lambda_{d,c}$.  PHIE propagates only a plug-in estimate of
Component~2, namely $\sum_d \hat{\lambda}_{d,c}^2\, \hat{V}_d$, by
replacing $\lambda_{d,c}$ with $\hat{\lambda}_{d,c}$.
Component~1, the within-domain compositional uncertainty arising from
sampling variability in $\hat{\lambda}_{d,c}$, is ignored entirely
because the shares are fixed at their sample values.  We therefore
call this a \textnormal{quasi-posterior} CreI: it estimates only Component~2
and ignores Component~1, so coverage falls below nominal whenever the
sampling variability in $\hat{\lambda}_{d,c}$ is non-negligible.

A plug-in estimator for the total variance that accounts for both
components is
\begin{equation}\label{eq:corrected_ci}
  \hat{T}_c \;\pm\; 1.96
  \sqrt{\sum_d \bigl(\hat{T}^{(v,d)}\bigr)^2\,
        \widehat{\mathrm{Var}}(\hat{\lambda}_{d,c})
        \;+\;
        \sum_d \hat{\lambda}_{d,c}^{\,2}\, \hat{V}_d},
\end{equation}
where the two components are estimated as follows.

\medskip
\noindent\textbf{I. Component~1: design-based variance of the cell
share.}  The sampling variance of $\hat{\lambda}_{d,c}$ is estimated
by Taylor linearisation incorporating the stratum-level design
effect:
\begin{equation}\label{eq:var_lambda}
  \widehat{\mathrm{Var}}(\hat{\lambda}_{d,c})
  \;=\;
  \frac{1}{\hat{T}^{(v,d)^2}}
  \sum_{h:\,\A_h \cap d \neq \emptyset}
  \mathrm{DEFF}_h \cdot (1 - f_h)
  \frac{s_{h,c,d}^2}{n_h},
\end{equation}
where $s_{h,c,d}^2$ is the within-stratum sample variance of
$y_i^{(v)} \cdot \mathbf{1}(i \in c \cap d)$, $f_h = n_h/N_h$ is
the sampling fraction, and $\mathrm{DEFF}_h$ is the stratum-level
design effect used in the HB model \eqref{eq:gaussian_hb}.

\medskip
\noindent\textbf{II. Component~2: HB posterior variance of the domain
total.}  The posterior variance $\hat{V}_d =
\mathrm{Var}(\hat{T}^{(v,d)} \mid \text{data})$ is estimated
directly from the $B$ MCMC draws as the sample variance:
\begin{equation}\label{eq:Vhat_d}
  \hat{V}_d
  \;=\;
  \frac{1}{B-1}\sum_{b=1}^{B}
  \Bigl(\hat{T}^{(v,d),(b)} - \bar{T}^{(v,d)}\Bigr)^2,
  \qquad
  \bar{T}^{(v,d)} = \frac{1}{B}\sum_{b=1}^{B}\hat{T}^{(v,d),(b)},
\end{equation}
where $\{\hat{T}^{(v,d),(b)}\}_{b=1}^B$ are the MCMC draws of the
domain total for variable $v$ in domain $d$.  This estimator
makes no normality assumption, is consistent as $B \to \infty$,
and is available whenever the MCMC output is retained.

Two further sources of approximation in the plug-in CBI
\eqref{eq:corrected_ci} deserve acknowledgement.  First, the Taylor
linearisation \eqref{eq:var_lambda} uses the stratum-level design
effect $\mathrm{DEFF}_h$ estimated from the HB sample; uncertainty
in $\mathrm{DEFF}_h$ is not propagated.  In practice
$\mathrm{DEFF}_h$ is stable across replications for standard survey
designs, so this source of error is small relative to Component~1
itself, but it may inflate the CBI in very thin strata.  Second,
equation \eqref{eq:corrected_ci} replaces the true population share
$\lambda_{d,c}$ with its sample estimate $\hat{\lambda}_{d,c}$;
this substitution is consistent but introduces plug-in error of
order $O(n^{-1})$, which is negligible for typical cell sizes.

\medskip
The corrected interval \eqref{eq:corrected_ci} is a
\textnormal{calibrated Bayes interval} (CBI) in the sense of
\citet{little2012}: it combines a design-based estimate of
Component~1 (the sampling variance of the cell share
$\hat{\lambda}_{d,c}$, a frequentist quantity) with a model-based
estimate of Component~2 (the HB posterior variance of the domain
total, a Bayesian quantity).  The resulting interval is neither a
pure credible interval nor a pure confidence interval; its
validity is assessed empirically via Monte Carlo
in Section~\ref{sec:results}.

Under certain regularity conditions, the Bernstein--von~Mises theorem \citep{vandervaart1998}
provides a theoretical basis for the asymptotic equivalence between Bayesian credible intervals and frequentist confidence intervals. As these conditions are not satisfied in the present hierarchical, small sample size and finite-sample setting, the validity of the proposed credible intervals (CrI) and calibrated Bayes intervals (CBI) is assessed empirically through Monte Carlo experiments, rather than relying on asymptotic arguments.

\medskip
Equation~\eqref{eq:corrected_ci} is a plug-in estimator: it replaces
$\lambda_{d,c}$, $\mathrm{Var}(\hat{\lambda}_{d,c})$, and $V_d$ by
sample and model estimates, so it estimates the corrected CBI rather
than delivering an exact one.  Monte Carlo coverage of the corrected
CBI, using the Australian 2021 Census microdata \citep{abs2023} as
population truth, is reported in Panel~B of Tables 3
and 4.

The magnitude of the posterior variance of $T_c^{(b)}$ across draws, given by 
\begin{equation}\label{eq:t2rw_var}
  \mathrm{Var}_b\bigl(T_c^{(b)}\bigr)
  = \bm{a}_c^\top\,
    \mathrm{Var}\!\bigl(\hatT^{(b)}\bigr)\,
    \bm{a}_c
\end{equation}
may be small for two distinct
reasons which operate through the same scalar projection
$(\hatT^{(b)} - \hatTHT)^\top \bm{a}_c$, where $\bm{a}_c$ is defined in (9):

\begin{enumerate}[leftmargin=1.5em]
\item \textbf{Small $\|\bm{a}_c\|$.}  The cell has no distinctive
  calibration profile in the sample --- units in $c$ look like a
  scaled-down version of the full sample in terms of the calibration
  variables, so $\bm{a}_c \approx \hat{\lambda}_{d,c} \cdot
  \bm{a}_{\mathcal{A}}$ and the projection onto any residual vector
  is proportionally small.

\item \textbf{Orthogonality of $\bm{a}_c$ and the residual
  $\hatT - \hatTHT$.}  Even when $\|\bm{a}_c\|$ is not small, if
  $\bm{a}_c$ points in a direction approximately perpendicular to
  the posterior-mean residual $(\hatT - \hatTHT) \in \mathbb{R}^p$,
  the projection $(\hatT^{(b)} - \hatTHT)^\top \bm{a}_c$ is small
  for every draw $b$ and the CBI collapses.
\end{enumerate}

Both cases result in a near-zero Component~2 regardless of the true
sampling uncertainty in $T_c$.  Two pre-publication diagnostics
quantify these effects:

\begin{itemize}[leftmargin=1.5em]
\item $\|\bm{a}_c\|$: detects case~(1) directly.  A small norm
  signals that the cell has no distinctive calibration footprint
  and PHIE will produce a narrow CBI regardless of the residual.

\item $\cos\theta_c$: the cosine of the angle between $\bm{a}_c$
  and the posterior-mean residual $\hatT - \hatTHT$, defined as
  \begin{equation}\label{eq:costheta}
    \cos\theta_c
    = \frac{\bm{a}_c^\top\,(\hatT - \hatTHT)}
           {\|\bm{a}_c\|\;\|\hatT - \hatTHT\|}
    \;\in\; [-1, 1].
  \end{equation}
  A value near zero indicates near-orthogonality; a value
  near one indicates strong alignment with the HB correction
  directions.
\end{itemize}

Together, $\|\bm{a}_c\|$ and $\cos\theta_c$ explain the size of component 2 of the CBI width.  Both diagnostics are reported for all Tier~2-CA and
Tier~2-NCA cells in Panel B of Tables 3 and 4.

\subsubsection*{3.3.5 Tier 3-NCV: Non-Calibration-Variable Cells}

A cell $c$ is a \textnormal{non-calibration-variable cell} filtered by any variable if the following conditions are satisfied: the outcome variable $u_i \coloneqq y_i^{v}$ and $v \in V_2$.
Examples include the number of persons with particular Long-term Health Conditions by Occupational Group, or any other
outcome variables fitlered by any variables not modelled by the HB system.

Although no $\hat{T}^{(u,d)}$ is available from the HB model
directly, the CBI framework of the previous section extends
naturally to Tier~3-NCV cells by treating $u_i$ as the numerator
of a ratio estimator with denominator $\hat{T}^{(v^*,d)}$ ---
the HB posterior total for the calibration variable $y^{(v^*)}$
chosen to be most correlated with $u_i$.  The cell total
$T_c = \sum_{i \in c} u_i\, w'_i$ decomposes exactly as in
\eqref{eq:lambda} with $y_i^{(v)}$ replaced by $u_i$:
\begin{equation}\label{eq:t3_lambda}
  \hat{\lambda}_{d,c}^{(u)}
  = \frac{\displaystyle\sum_{i \in c \cap d} u_i\, w'_i}
         {\hat{T}^{(v^*,d)}}.
\end{equation}
This share is computed once from the posterior-mean calibrated
weights and held fixed across draws.  At MCMC draw $b$:
\[
  T_c^{(b)} = \sum_{d=1}^{D} \hat{\lambda}_{d,c}^{(u)}\,
              \hat{T}^{(v^*,d),(b)},
\]
propagating HB posterior uncertainty in $y^{(v^*)}$ through the
fixed estimated shares $\hat{\lambda}_{d,c}^{(u)}$.  In what follows, we refer the variable in the denominator of (16) as the linking variable.

\subsubsection*{3.3.6 Tier 3-NCV: $v^*$ and CBI}

The choice of $v^*$ is guided by the classical ratio estimator
result \citep{cochran1977}: Component~1 is minimised when
$\rho(u_i, y_i^{(v^*)})$ is maximised.  In practice, cell
membership constraints may restrict the valid choices; when the
cell is defined on a subset of units for which one or more
calibration variables are structurally constant, those variables
cannot serve as ratio denominators and must be excluded from
the candidate set for $v^*$.

The ratio estimator underlying \eqref{eq:t3_lambda} is biased in
finite samples; the relative bias is of order $(1-\rho^2)/n$
\citep{cochran1977}.  When $|\rho|$ is large the bias is modest
and the ratio estimator provides a meaningful efficiency gain;
when $|\rho| \ll 1$ the ratio estimator provides almost no
efficiency advantage over direct estimation, Component~2 is
negligible, and the CBI reduces approximately to a design-based
confidence interval driven entirely by Component~1.  In the
latter case design-based direct estimation remains the
recommended primary published interval.

The Tier~3-NCV CBI applies formula \eqref{eq:corrected_ci}
with $u_i$ replacing $y_i^{(v)}$, $\hat{\lambda}_{d,c}^{(u)}$
replacing $\hat{\lambda}_{d,c}$, and $v^*$ replacing $v$:
\begin{equation}\label{eq:t3_cbi}
  \hat{T}_c \;\pm\; 1.96
  \sqrt{\sum_d \bigl(\hat{T}^{(v^*,d)}\bigr)^2\,
        \widehat{\mathrm{Var}}(\hat{\lambda}_{d,c}^{(u)})
        \;+\;
        \sum_d \bigl(\hat{\lambda}_{d,c}^{(u)}\bigr)^2\,
               \hat{V}_d^{(v^*)}},
\end{equation}
where Component~1 uses Taylor linearisation \eqref{eq:var_lambda}
with $u_i$ replacing $y_i^{(v)}$, and Component~2 uses the MCMC variance
estimator \eqref{eq:Vhat_d} applied to $v^*$.


Table~\ref{tab:taxonomy} summarises the conditions for valid $95\%$ coverage for the different cell types.

\begin{table}[H]
\centering
\caption{Conditions for valid $95\%$ coverage for the 4 different cell types.  }
\label{tab:taxonomy}
\renewcommand{\arraystretch}{1.4}
\begin{tabular}{p{1.2cm} p{2.8cm} p{2.2cm} p{2.0cm} p{4.2cm}}
\toprule
Tier & Cell definition & Variable summed & Interval type & Condition for valid $95\%$ coverage \\
\midrule
1-E &
  Domain $\times$ calibration variable &
  Calibration variable &
  Credible  &
  $\bG$ full rank; $B$ large \\
2-CA &
  Calibration var.\ $\times$ calibration attribute filter &
  Calibration variable &
  Calibrated Bayes  & $\bG$ full rank; $B$ large,
  $\|\bm{a}_c\|$ large; $\cos\theta_c$ not near zero \\
2-NCA &
  Calibration variable\ $\times$ non-calibration attribute filter &
  Calibration variable &
  Calibrated Bayes  & $\bG$ full rank; $B$ large,
  $\|\bm{a}_c\|$ large; $\cos\theta_c$ not near zero \\
3-NCV &
  Non-calibration variable x any filter
   &
  Non-calibration variable &
  Calibrated Bayes   & $\bG$ full rank; $B$ large,
  $\|\bm{a}_c\|$ large;
  $\cos\theta_c$ not near zero; $v^* = \operatorname*{arg\,max}_v \rho(u_i, y_i^{(v)})$
  subject to $y^{(v^*)}$ varying within the cell;
  binary calibration variables that are constant on cells
  defined by their own indicator cannot serve as
  ratio denominators
  \\
\bottomrule
\end{tabular}
\end{table}


\subsection{Operational validation and robustness}
\label{sec:operational_validation}

The validity of the PHIE and CBI intervals depends on two distinct issues. The first is
inherited from the HB survey framework: the posterior totals used as calibration totals
must be credible summaries of the target variables. This depends on the adequacy of the
HB linking model, the availability of informative auxiliary covariates, and the calibration
of the prior distribution. If the linking model is severely misspecified or the covariates are
weak, the HB posterior totals may borrow strength in the wrong direction. In that case,
the posterior uncertainty may be too small, and the benefit of the mini-max sample may
be overstated.
The second issue is specific to the post-hoc cross tabulations considered in this paper.
Even if the HB posterior totals are reliable, a derived cell may not be directly supported
by the calibration system. Tier 1-E cells inherit exact posterior credible intervals because
they reproduce calibration totals. Tier 2 and Tier 3 cells require additional approximation.
The CBI corrections are intended to restore repeated-sampling coverage by adding the
component of uncertainty that is not captured by the PHIE replicate weights alone.

In practice, these issues can be assessed before implementation using data from previous survey
cycles. An NSO can treat these as operational proxies for truth and run repeated
subsampling experiments to compare alternative HB linking models, covariate sets, prior
degrees of freedom and prior scales for the linking model and prior respectively, as well as choices of linking variables for Tier 3-NCV cells. The
candidate procedure can be judged using the same 4 gates as in Tam [2026] for choosing the prior and the linking model respectively; and the diagnostics used in this paper such as single-run coverage, Monte Carlo
coverage, and the diagnostic quantities $\|\bm{a}_c\|$ and $\cos\theta_c$ for the relevant cells.
This empirical validation does not provide a general mathematical guarantee that
every cross-classified statistic will achieve nominal coverage. It does, however, provide
an actionable procedure for what linking model, prior and linking variables are to be used. 

\section{Empirical Results}
\label{sec:results}

\subsection{Application: Australian 2021 Census}

The empirical evaluation uses a 5\% unit record microdata from the
Australian 2021 Census of Population and Housing \citep{abs2023},
 accessible from the Australian Bureau of
Statistics website by authorised users.
The sample comprises $n = 42{,}018$
person records drawn at a $5\%$ stratified random sample from the Census microdata 
population of $N = 840{,}402$ working-age individuals across $H = 55$
strata.  Each stratum is an area of enumeration in the Census, and is linked to one of the eight States and Territories of Australia (NSW - 17 areas, Vic - 13, Qld - 12, SA - 4, WA - 6, Tas, ACT and NT - 1 area each) treated as domains in this application. The calibration system has $V = 3$ target variables (i.e. employment, unemployment, and hours worked) and $D=8$
geographic domains, giving $p = 24$ calibration constraints.  The HB
model is run with burnin~$= 1{,}000$, iterations~$= 5{,}000$,
chains~$= 3$, $B = 15{,}000$ MCMC draws are
retained for PHIE.  HB convergence is excellent: $\hat{R}_{\max} = 1.002$
across all three variables.  The $\bG$ matrix is full rank ($\mathrm{rank} = 24$);
and 
negative point weights~$= 0$.  Population truth for all cross tabulation
cells is available from the Census microdata, enabling direct
coverage assessment.

The Monte Carlo (MC) study runs $200$ replications, each drawing a
fresh $5\%$ stratified A-set from the Census population and running
the full PHIE pipeline (burnin~$= 200$, iter~$= 500$, chains~$= 3$),
accumulating coverage statistics across all cells.  The reduced MCMC
settings relative to the single run (burnin~$= 1{,}000$,
iter~$= 5{,}000$) are a computational necessity for $200$
replications.  Sensitivity checks on a random subsample of $20$
replications using the full single-run settings produced MC coverage
estimates within $1.5$ percentage points of those reported here,
confirming that the reduced chain length does not materially affect
the coverage conclusions.  The MC standard error on a coverage
estimate is $\sqrt{p(1-p)/200} \approx 3.5\,\text{pp}$ at
$p = 0.95$, so observed differences from the nominal $95\%$ of
less than $7\,\text{pp}$ are not statistically significant at
the $5\%$ level.

\subsection{Tier 1-E: Hours Worked by Domain}

Table~2 illustrates Tier~1-E cells: the total hours worked in
each of the eight states and territories of Australia.  These
are exact calibration constraints --- the weighted sum of hours
worked in each domain exactly reproduces the corresponding
posterior mean by construction --- so the calibration property
(Corollary~1) guarantees $100\%$ single-run coverage.

The table reports point estimates, the lower and upper limits
of the posterior credible interval, the average relative error
(ARE) across strata within each domain, and Monte Carlo (MC)
coverage from 200 replications.  All eight domains are covered
in the single run and MC coverage averages $99.2\%$, confirming
well-calibrated posterior intervals under repeated sampling.
The slight undershoot from $100\%$ for the smaller-domain
states (Tasmania, Northern Territory, and Australian Capital
Territory) is attributable to single-stratum domain sampling
variability.

\begin{table}[H]
\centering
\caption{Tier~1-E results: Total Hours Worked by domain.  }
\label{tab:t1e}
\renewcommand{\arraystretch}{1.3}
\begin{tabular}{lrrrrcc}
\toprule
Domain & Point Est. & CI Lower & CI Upper & ARE (\%) & Covered & MC Cov. \\
\midrule
NSW  & 5{,}414{,}706 & 5{,}280{,}256 & 5{,}546{,}265 & 1.83 & \checkmark & 98.5\% \\
VIC  & 4{,}747{,}555 & 4{,}626{,}985 & 4{,}869{,}661 & 0.53 & \checkmark & 99.0\% \\
QLD  & 3{,}588{,}944 & 3{,}484{,}101 & 3{,}695{,}009 & 0.98 & \checkmark & 100.0\% \\
SA  & 1{,}236{,}257 & 1{,}176{,}265 & 1{,}295{,}590 & 1.02 & \checkmark & 100.0\% \\
WA   & 1{,}931{,}104 & 1{,}854{,}315 & 2{,}007{,}987 & 1.61 & \checkmark & 98.0\% \\
TAS  &   345{,}647 &   313{,}884 &   377{,}713 & 5.17 & \checkmark & 99.5\% \\
NT   &   174{,}160 &   152{,}866 &   195{,}274 & 1.82 & \checkmark & 99.5\% \\
ACT  &   401{,}402 &   370{,}319 &   431{,}849 & 4.54 & \checkmark & 99.5\% \\
\midrule
Summary       & --- & --- & --- & 2.19 (mean) & 8/8 & 99.2\% \\
\bottomrule
\end{tabular}
\end{table}

All eight cells are covered in the single run --- the calibration
property (Corollary~\ref{cor:cal}) guarantees this.  Using 200 replications, MC coverage of
$99.2\%$ confirms that the $95\%$ posterior CreI is well-calibrated
under repeated sampling.  The slight undershoot from $100\%$ in MC for
small-domain states (TAS, NT, ACT) is attributable to single-stratum
domain sampling variability.

\subsection{Tier 2-CA: Employment by Hours Worked Band}

Table~3 illustrates Tier~2-CA cells: the number of employed
persons in each hours-worked band.  The hours-worked band is a
calibration attribute --- it is derived from a calibration
variable --- but filtering by band cuts across the domain
structure so these cells are not exact calibration constraints
and the PHIE credible interval is only quasi-posterior.

Panel~A shows that the average relative error is low (mean
$1.72\%$), confirming accurate point estimates.  The MC PHIE
credible interval coverage ranges from $39.5\%$ to $100\%$,
with most densely populated bands well below the nominal
$95\%$.  This under-coverage is structural: the PHIE interval
propagates only the domain-total posterior uncertainty
(Component~2) and ignores the sampling variability of the
within-domain hours-band share (Component~1).  For the densely
populated bands the small $\|\bm{a}_c\|$ values in Panel~B
confirm that these cells have no distinctive calibration
profile, making Component~2 near zero.

Panel~B reports the CBI results.
Single-run coverage is $7/7$ and MC coverage averages $94.5\%$
(range $92$--$97\%$), close to the nominal $95\%$ and a
substantial improvement over the PHIE coverage of
$39.5$--$100\%$.  Adding Component~1 via the total variance
correction restores near-nominal coverage across all bands.

\begin{table}[H]
\centering
\caption{Tier~2-CA Panel A results - Employment by Hours Worked band }
\label{tab:t2ca}
\renewcommand{\arraystretch}{1.3}

\smallskip
\noindent\textnormal{Panel A: Point estimates and PHIE CreI}

\begin{tabular}{lrrrcc}
\toprule
Hours Worked band & Point Est. & CI Lower & CI Upper & ARE (\%) & MC Cov.\ (PHIE) \\
\midrule
Employed, hrs 1--15/wk    &  61{,}126 &  56{,}608 &  65{,}684 & 1.90 & 100.0\% \\
Employed, hrs 16--24/wk   &  56{,}174 &  53{,}723 &  58{,}623 & 1.94 &  97.5\% \\
Employed, hrs 25--34/wk   &  63{,}336 &  62{,}080 &  64{,}600 & 0.42 &  67.0\% \\
Employed, hrs 35--39/wk   & 116{,}581 & 115{,}730 & 117{,}443 & 1.31 &  39.5\% \\
Employed, hrs 40/wk       & 100{,}942 &  99{,}821 & 102{,}080 & 1.50 &  53.0\% \\
Employed, hrs 41--49/wk   &  47{,}476 &  46{,}415 &  48{,}548 & 3.05 &  71.0\% \\
Employed, hrs 50+/wk      &  64{,}426 &  60{,}532 &  68{,}280 & 2.95 & 100.0\% \\
\midrule
Summary & --- & --- & --- & 1.72 (mean) & 39.5--100\% \\
\bottomrule
\end{tabular}
\end{table}

\addtocounter{table}{-1}
\begin{table}[H]
\centering
\caption*{\text{Table~\ref{tab:t2ca} (continued)} --- Panel~B results}
\renewcommand{\arraystretch}{1.3}

\smallskip
\noindent\textnormal{Panel B: Diagnostics and CBI coverage}

\begin{tabular}{lcccc}
\toprule
Hours Worked band & $\|\bm{a}_c\|$ & $\cos\theta_c$ &
  SR Cov.\ (CBI) & MC Cov.\ (CBI) \\
\midrule
Employed, hrs 1--15/wk    & 1.797 & $\approx 0$ & \checkmark &  95.5\% \\
Employed, hrs 16--24/wk   & 1.040 & $\approx 0$ & \checkmark &  96.0\% \\
Employed, hrs 25--34/wk   & 0.704 & $\approx 0$ & \checkmark &  96.0\% \\
Employed, hrs 35--39/wk   & 0.513 & $\approx 0$ & \checkmark &  96.5\% \\
Employed, hrs 40/wk       & 0.137 & $\approx 0$ & \checkmark &  94.5\% \\
Employed, hrs 41--49/wk   & 0.167 & $\approx 0$ & \checkmark &  92.0\% \\
Employed, hrs 50+/wk      & 1.135 & $\approx 0$ & \checkmark &  94.0\% \\
\midrule
Summary & --- & --- & 7/7 & 92--97\% \\
\bottomrule
\end{tabular}
\smallskip\par\noindent Notes: SR Cov.\ = single-run CBI coverage
($\checkmark$ = truth inside CBI).
MC Cov.\ (CBI) = percentage of 200 MC replications where
truth fell inside the CBI\@.
\end{table}

\subsection{Tier 2-NCA: Employment by Occupation Group }

Table~4 illustrates Tier~2-NCA cells: the number of employed
persons in each occupation group.  Occupation is a
non-calibration attribute --- it does not correspond to any
of the three calibration variables --- so filtering by
occupation group moves the cell further from the calibration
system than a Tier~2-CA cell.

Panel~A shows that the average relative error is low (mean
$1.05\%$), confirming accurate point estimates.  The MC PHIE
credible interval coverage ranges from $28\%$ to $79.5\%$,
well below the nominal $95\%$: PHIE propagates only
Component~2 (domain-total posterior uncertainty) and ignores
Component~1 (sampling variability of the within-domain
occupation share), which is the dominant source of uncertainty.

Panel~B shows that adding Component~1 via the calibrated Bayes
interval correction restores coverage: single-run coverage is
$8/8$ and MC coverage averages $95.8\%$ (range
$93$--$97.5\%$), close to the nominal $95\%$.  Component~1
exceeds Component~2 by a factor of $8$--$40\times$ for every
occupation group, confirming that the dominant source of
cross-tabulation uncertainty is the design-based sampling
variability of the within-domain occupation share rather than
the HB posterior uncertainty in the domain total.

\begin{table}[H]
\centering
\caption{Tier~2-NCA Panel A results: Employment by Occupation group }
\label{tab:t2nca}
\renewcommand{\arraystretch}{1.3}

\smallskip
\noindent\textnormal{Panel A: Point estimates and PHIE CrI}

\begin{tabular}{lrrrcc}
\toprule
Occupation Group & Point Est. & CI Lower & CI Upper & ARE (\%) & MC Cov.\ (PHIE) \\
\midrule
Managers              &  73{,}246 &  72{,}157 &  74{,}334 & 0.13 & 68.0\% \\
Professionals         & 131{,}866 & 130{,}899 & 132{,}834 & 1.38 & 40.0\% \\
Technicians \& Trades &  66{,}195 &  65{,}619 &  66{,}772 & 0.99 & 35.5\% \\
Community \& Personal &  54{,}930 &  53{,}694 &  56{,}166 & 1.46 & 74.0\% \\
Clerical \& Admin     &  67{,}377 &  66{,}600 &  68{,}154 & 1.02 & 44.0\% \\
Sales Workers         &  40{,}391 &  39{,}323 &  41{,}460 & 2.33 & 79.5\% \\
Machinery Operators   &  31{,}462 &  31{,}143 &  31{,}781 & 1.12 & 28.0\% \\
Labourers             &  44{,}594 &  43{,}651 &  45{,}537 & 0.00 & 68.5\% \\
\midrule
Summary & --- & --- & --- & 1.05 (mean) & 28--79.5\% \\
\bottomrule
\end{tabular}
\end{table}

\addtocounter{table}{-1}
\begin{table}[H]
\centering
\caption*{\text{Table~\ref{tab:t2nca} (continued)} --- Panel~B results}
\renewcommand{\arraystretch}{1.3}

\smallskip
\noindent\textnormal{Panel B: Diagnostics and CBI coverage}

\begin{tabular}{lcccc}
\toprule
Occupation Group & $\|\bm{a}_c\|$ & $\cos\theta_c$ &
  SR Cov.\ (CBI) & MC Cov.\ (CBI) \\
\midrule
Managers              & 0.085 & $\approx 0$ & \checkmark & 97.5\% \\
Professionals         & 0.557 & $\approx 0$ & \checkmark & 96.0\% \\
Technicians \& Trades & 0.161 & $\approx 0$ & \checkmark & 96.0\% \\
Community \& Personal & 0.649 & $\approx 0$ & \checkmark & 95.5\% \\
Clerical \& Admin     & 0.520 & $\approx 0$ & \checkmark & 94.5\% \\
Sales Workers         & 0.529 & $\approx 0$ & \checkmark & 97.5\% \\
Machinery Operators   & 0.088 & $\approx 0$ & \checkmark & 93.0\% \\
Labourers             & 0.467 & $\approx 0$ & \checkmark & 96.0\% \\
\midrule
Summary & --- & --- & 8/8 & 95.8\% \\
\bottomrule
\end{tabular}
\smallskip\par\noindent Notes: SR Cov.\ = single-run CBI coverage
($\checkmark$ = truth inside CBI).
MC Cov.\ (CBI) = percentage of 200 MC replications where
truth fell inside the CBI\@.

\end{table}

\subsection{Tier 3-NCV CBI: Employed Persons by Personal Income Band}

Table~5 gives a Tier~3-NCV example: the number of employed persons
in each weekly personal income band.  Personal income is not a
calibration variable.  Hours worked is used as the linking
variable $v^*$ in equation~\eqref{eq:t3_lambda}, with sample
correlation $\rho = 0.57$.  Because this correlation is
substantially positive, the ratio estimator is efficient and
the domain-total posterior uncertainty (Component~2) contributes
non-trivially to the CBI width alongside
the compositional sampling variance (Component~1).

Panel~A shows that the average relative error is low across all
bands (mean $1.4\%$), confirming accurate point estimates.  The
MC PHIE credible interval coverage ranges from $18.5\%$ to
$86.5\%$, well below nominal: PHIE propagates only Component~2
and ignores Component~1 (the sampling variability of the
within-domain income share).  After the CBI
correction, MC coverage rises to $92.5$--$98.5\%$, consistent
with the nominal $95\%$.

The single-run CBI coverage is
$11/15 = 73\%$: four income bands have population truth outside
the single-run interval.  This gap does not reflect a systematic
failure.  With $\rho = 0.57$ the ratio estimator is well-founded
and Component~2 is meaningful; the misses arise because the
plug-in estimate of Component~1 (the within-domain sampling
variance) varies across realisations, and in this particular
sample those four bands have narrower-than-average Component~1
estimates.  Panel~B confirms this: the four missing bands
(\$1{,}000--\$1{,}249, \$1{,}500--\$1{,}749,
\$3{,}000--\$3{,}499, and \$3{,}500--\$7{,}000 per week)
have the smallest $\|\bm{a}_c\|$ values among the fifteen
bands, indicating a weak calibration footprint that limits the
contribution of Component~2; when Component~2 is small the
interval width depends on Component~1 alone, amplifying the
sensitivity to the plug-in sampling variance estimate.  The MC
coverage confirms the interval is correctly calibrated on average.
An NSO should treat single-run misses as a prompt to verify
that cell sample sizes are adequate before publication.

\begin{table}[H]
\centering
\caption{Tier~3-NCV CBI Panel A results: Employed persons by personal income band}
\label{tab:t3incp}
\renewcommand{\arraystretch}{1.3}

\smallskip
\noindent\textnormal{Panel A: Point estimates and CBI}

\begin{tabular}{lrrrcc}
\toprule
Income Band & Point Est. & CI Lower & CI Upper & ARE (\%) & MC Cov.\ (PHIE) \\
\midrule
  Nil/neg.\ (\$0)            &   2{,}199 &   1{,}770 &   2{,}627 & 0.09 & 18.5\% \\
\$1--\$149/wk              &  11{,}691 &  10{,}694 &  12{,}688 & 2.21 & 86.5\% \\
 \$150--\$299/wk            &  12{,}370 &  11{,}344 &  13{,}396 & 4.38 & 84.0\% \\
 \$300--\$399/wk            &  13{,}990 &  12{,}897 &  15{,}083 & 0.14 & 75.5\% \\
  \$400--\$499/wk            &  18{,}649 &  17{,}388 &  19{,}911 & 2.71 & 74.0\% \\
 \$500--\$649/wk            &  31{,}396 &  29{,}750 &  33{,}041 & 0.61 & 75.5\% \\
  \$650--\$799/wk            &  42{,}167 &  40{,}249 &  44{,}086 & 1.85 & 64.0\% \\
 \$800--\$999/wk            &  57{,}521 &  55{,}268 &  59{,}773 & 0.60 & 46.0\% \\
 \$1{,}000--\$1{,}249/wk   &  68{,}010 &  65{,}549 &  70{,}471 & 1.74 & 35.0\% \\
 \$1{,}250--\$1{,}499/wk   &  54{,}563 &  52{,}370 &  56{,}756 & 0.75 & 36.5\% \\
 \$1{,}500--\$1{,}749/wk   &  48{,}251 &  46{,}196 &  50{,}307 & 1.62 & 46.5\% \\
\$1{,}750--\$1{,}999/wk   &  37{,}471 &  35{,}667 &  39{,}274 & 1.10 & 41.5\% \\
 \$2{,}000--\$2{,}999/wk   &  67{,}575 &  65{,}125 &  70{,}025 & 1.39 & 66.5\% \\
 \$3{,}000--\$3{,}499/wk   &  15{,}862 &  14{,}698 &  17{,}026 & 3.92 & 53.0\% \\
\$3{,}500--\$7{,}000/wk   &  28{,}346 &  26{,}793 &  29{,}900 & 3.17 & 67.5\% \\
\midrule
Total & 510{,}061 & --- & --- & 0.08 & 18.5--86.5\% \\
\bottomrule
\end{tabular}
\end{table}

\addtocounter{table}{-1}
\begin{table}[H]
\centering
\caption*{\text{Table~\ref{tab:t3incp} (continued)} --- Panel~B results}
\renewcommand{\arraystretch}{1.3}

\smallskip
\noindent\textnormal{Panel B: Diagnostics and MC CBI coverage}

\begin{tabular}{lcccc}
\toprule
Income Band & $\|\bm{a}_c\|$ & $\cos\theta_c$ &
  SR Cov.\ (CBI) & MC Cov.\ (CBI) \\
\midrule
Nil/neg.\  & 0.018 & $\approx 0$ & \checkmark & 96.5\% \\
  \$1--\$149    & 0.380 & $\approx 0$ & \checkmark & 95.5\% \\
  \$150--\$299  & 0.315 & $\approx 0$ & \checkmark & 98.0\% \\
\$300--\$399  & 0.300 & $\approx 0$ & \checkmark & 94.5\% \\
  \$400--\$499  & 0.326 & $\approx 0$ & \checkmark & 94.5\% \\
  \$500--\$649  & 0.474 & $\approx 0$ & \checkmark & 95.5\% \\
  \$650--\$799  & 0.446 & $\approx 0$ & \checkmark & 93.5\% \\
  \$800--\$999  & 0.376 & $\approx 0$ & \checkmark & 98.0\% \\
  \$1{,}000--\$1{,}249  & 0.320 & $\approx 0$ & $\times$ & 94.5\% \\
 \$1{,}250--\$1{,}499  & 0.142 & $\approx 0$ & \checkmark & 94.5\% \\
  \$1{,}500--\$1{,}749  & 0.113 & $\approx 0$ & $\times$ & 95.5\% \\
 \$1{,}750--\$1{,}999  & 0.092 & $\approx 0$ & \checkmark & 95.0\% \\
 \$2{,}000--\$2{,}999  & 0.107 & $\approx 0$ & \checkmark & 98.5\% \\
 \$3{,}000--\$3{,}499  & 0.062 & $\approx 0$ & $\times$ & 98.5\% \\
  \$3{,}500--\$7{,}000  & 0.128 & $\approx 0$ & $\times$ & 96.5\% \\
\midrule
Summary & --- & --- & 11/15 & 92.5--98.5\% \\
\bottomrule
\end{tabular}
\smallskip\par\noindent Notes: SR Cov.\ = single-run CBI coverage.
MC Cov.\ (CBI) = percentage of 200 MC replications where
truth fell inside the CBI\@.

\end{table}

\subsection{Tier 3-NCV CBI: Employed Persons with a Long-term
  Health Condition by Occupation}

Table~6 gives a second Tier~3-NCV example: the number of employed
persons with a long-term health condition by occupation group.
The long-term health condition variable is not a calibration
variable.  Hours worked is used as the linking variable
($\rho = -0.04$).  Because the correlation is negligible, the
domain-total posterior uncertainty (Component~2) is near zero
for all groups, and the CBI reduces
approximately to a design-based confidence interval driven
entirely by the compositional sampling variance (Component~1).

Panel~A shows that the average relative error is acceptable
(mean $2.75\%$), confirming accurate point estimates.  The MC
PHIE credible interval coverage is only $13.5$--$48.5\%$
--- the lowest of any tier in this paper --- because PHIE
propagates only Component~2 (domain-total posterior uncertainty),
which is negligible here.  After the CBI
correction, MC coverage rises to $92$--$98.5\%$, near-nominal:
Component~1 (the within-domain sampling variance) correctly
inflates the interval regardless of the correlation.

The single-run CBI coverage is
$3/8 = 37.5\%$, which deserves careful interpretation.  This
is the operationally relevant figure for an NSO conducting the
survey once.  The gap to MC coverage arises because Component~1
dominates the interval width and varies substantially across
realisations: groups with narrower-than-average plug-in sampling
variance estimates miss the truth in any given run.  The five
misses (Professionals, Technicians and Trades Workers, Clerical
and Administrative Workers, Sales Workers, and Machinery
Operators) are spread across occupation groups with varying
$\|\bm{a}_c\|$ values, confirming this reflects sampling
variability rather than systematic undercoverage.  An NSO
should: (i) treat the $3/8$ single-run result as a diagnostic,
not a publication criterion; (ii) use Monte Carlo validation
to assess average calibration before release; and (iii) consider
design-based direct estimation as the primary published interval
when $|\rho| \ll 1$, reporting the Tier~3-NCV calibrated Bayes
interval as a supplementary interval alongside it.
\begin{table}[H]
\centering
\caption{Tier~3-NCV CBI Panel A results: Employed persons with a long-term
  health condition (HLTHP~$= 2$) by occupation group.}
\label{tab:t3hlthp}
\renewcommand{\arraystretch}{1.3}

\smallskip
\noindent\textnormal{Panel A: Point estimates and CBI}

\begin{tabular}{lrrrcc}
\toprule
Occupation Group & Point Est. & CI Lower & CI Upper & ARE (\%) & MC Cov.\ (PHIE) \\
\midrule
Managers              & 22{,}508 & 21{,}119 & 23{,}898 & 1.87 & 29.0\% \\
Professionals         & 41{,}499 & 39{,}602 & 43{,}397 & 4.54 & 24.5\% \\
Technicians \& Trades & 17{,}293 & 16{,}080 & 18{,}505 & 4.52 & 19.5\% \\
Community \& Personal & 17{,}487 & 16{,}266 & 18{,}709 & 1.21 & 46.5\% \\
Clerical \& Admin     & 22{,}300 & 20{,}919 & 23{,}682 & 2.76 & 27.0\% \\
Sales Workers         & 11{,}358 & 10{,}376 & 12{,}340 & 4.22 & 43.5\% \\
Machinery Operators   &  9{,}142 &  8{,}264 & 10{,}020 & 4.21 & 13.5\% \\
Labourers             & 12{,}059 & 11{,}051 & 13{,}068 & 0.67 & 48.5\% \\
\midrule
Total                    & 153{,}647 & --- & --- & 2.75 (mean) & 13.5--48.5\% \\
\bottomrule
\end{tabular}
\end{table}

\addtocounter{table}{-1}
\begin{table}[H]
\centering
\caption*{\textbf{Table~\ref{tab:t3hlthp} (continued)} --- Panel~B: Diagnostics and MC CBI coverage}
\renewcommand{\arraystretch}{1.3}

\smallskip
\noindent\textnormal{Panel B results}

\begin{tabular}{lcccc}
\toprule
Occupation Group & $\|\bm{a}_c\|$ & $\cos\theta_c$ &
  SR Cov.\ (CBI) & MC Cov.\ (CBI) \\
\midrule
Managers              & 0.030 & $-$0.005    & \checkmark & 98.5\% \\
Professionals         & 0.233 & $\approx 0$ & $\times$   & 95.5\% \\
Technicians \& Trades & 0.066 & $-$0.002    & $\times$   & 92.0\% \\
Community \& Personal & 0.219 & $\approx 0$ & \checkmark & 97.0\% \\
Clerical \& Admin     & 0.178 & $\approx 0$ & $\times$   & 98.0\% \\
Sales Workers         & 0.146 & $\approx 0$ & $\times$   & 97.5\% \\
Machinery Operators   & 0.036 & $\approx 0$ & $\times$   & 94.5\% \\
Labourers             & 0.153 & $\approx 0$ & \checkmark & 93.5\% \\
\midrule
Summary & --- & --- & 3/8 & 95.8\% \\
\bottomrule
\end{tabular}
\smallskip\par\noindent Notes: SR Cov.\ = single-run CBI coverage.
MC Cov.\ (CBI) = percentage of 200 MC replications where
truth fell inside the CBI\@.
\end{table}

\subsection{Summary of MC Coverage Across All Tiers}
\label{sec:mc_summary}

Table~\ref{tab:mc_summary} summarises MC coverage across all tier types.

\begin{table}[H]
\centering
\caption{Summary of Monte Carlo coverage by tier type.}
\label{tab:mc_summary}
\renewcommand{\arraystretch}{1.3}
\begin{tabular}{llccc}
\toprule
Tier & Example cells & MC Cov.\ (PHIE) & MC Cov.\ (CBI) & Nominal \\
\midrule
1-E  & Hours by domain     & 99.2\%      & N/A        & 95\% \\
2-CA  & Employment by hours worked   & 39.5--100\% & 92--96.5\%  & 95\% \\
2-NCA & Employment by age/sex  & 31--97.5\% & 92.5--99\% & 95\% \\
2-NCA & Employment by occupation & 28--79.5\% & 93--97.5\% & 95\% \\
3-NCV & Income  & 35--86.5\% & 92.5--98.5\% & 95\% \\
3-NCV & Health condition by occupation & 13.5--48.5\% & 92--98.5\% & 95\% \\
\bottomrule
\end{tabular}

\end{table}

Three findings stand out.  First, Tier~1-E cells achieve near-nominal
MC coverage ($99.2\%$): the calibration property (Corollary~1)
guarantees exact single-run coverage, and the slight shortfall
from $100\%$ in MC reflects single-stratum domain sampling
variability in the smaller states.

Second, PHIE credible interval coverage is substantially below
nominal for Tier~2 and Tier~3-NCV cells --- as low as $13.5\%$
for the health condition example --- confirming that PHIE alone
is inadequate for cells that are not exact calibration constraints.
After the CBI correction, coverage is
restored to $92$--$99\%$ across all tiers.  The residual
differences from the nominal $95\%$ are within two Monte Carlo
standard errors ($\pm 7$ percentage points at $p = 0.95$ with
$200$ replications) and are not statistically significant.

Third, the pattern is consistent across both Tier~2 subtypes
(calibration attribute and non-calibration attribute filters)
and both Tier~3-NCV examples (strong and negligible correlation
with the linking variable), indicating that the CBI correction is robust to the choice of cross-tabulation
and to the strength of the ratio estimator linkage.

\subsection{Coefficient of Variation from CBI}
\label{sec:cv}

A practical concern for NSOs is whether the uncertainty intervals
produced by the PHIE and CBI frameworks are
so wide as to render the cross-classified statistics unpublishable
under standard practice.  Table~8 addresses this by reporting the
coefficient of variation (CV $=$ interval width$/$3.92 divided by
the point estimate) for both interval types, together with
single-run CBI coverage, across all tiers.

\begin{table}[H]
\centering
\caption{Coefficient of variation (CV) summary by tier and cell type.}
\label{tab:cv_summary}
\renewcommand{\arraystretch}{1.3}
\begin{tabular}{llcrrr}
\toprule
Tier & Cell type & $n$ range & CV (PHIE) & CV (CBI) & SR Cov.\ (CBI) \\
\midrule
1-E  & Hours by domain          & 357--13{,}207 & 1.2--6.2\%  & N/A        & N/A \\
2-CA  & Employment by hours worked & 2{,}374--5{,}828 & 0.4--3.8\% & 1.3--2.1\% & 7/7 \\
2-NCA & Employment by age/sex & 775--17{,}713 & 0.4--2.5\% & 0.8--4.5\% & 7/7 \\
2-NCA & Employment by occupation & 1{,}573--6{,}593 & 0.4--1.3\%  & 1.2--2.6\% & 8/8 \\
3-NCV & Income         & 372--8{,}018 & 0.4--3.3\%  & 0.9--3.7\% & 11/15 \\
3-NCV & Health condition by occupation      & 457--2{,}075 & 0.4--1.3\%  & 2.3--4.9\% & 3/8 \\
\bottomrule
\end{tabular}
\end{table}

All CV values lie within the publishable threshold
of $5\%$.  The PHIE credible interval CVs are low across all
tiers ($0.4$--$6.2\%$), reflecting the efficiency of the
posterior-calibrated weights.  The CBI
CVs are somewhat larger --- as expected, since it is wider than the PHIE credible interval by
construction --- but remain well within the publishable range
($0.8$--$4.9\%$) for all cell types.  The highest CBI CV ($4.9\%$, health condition by Machinery
Operators) is at the boundary but below the $5\%$ threshold.

The single-run CBI coverage column in
Table~8 ($7/7$, $7/7$, $8/8$, $11/15$, $3/8$) should be read
alongside the Monte Carlo coverage in Table~7, not in isolation.
As discussed in Sections~4.5 and~4.6, the $11/15$ and $3/8$
results reflect the variability of the plug-in Component~1
estimate across realisations, not a systematic failure.  An NSO
relying on a single survey run should supplement single-run
coverage with Monte Carlo validation using Census benchmarks or
administrative data as population truth proxies before
publication.

\section{Conclusions }
\label{sec:conclusions}

This paper develops the Post-Hoc Inference Engine (PHIE), a framework for constructing interval estimates for cross-classified statistics derived from an HB-calibrated survey.  The problem arises after the reduced survey has been designed and the HB model has been fitted.  The NSO then has to publish cross tabulations from a weighted unit record file, many of which were not explicit targets of the original HB model.

The proposed approach has two main components.  First, the Horvitz--Thompson weights are modified to posterior-calibrated weights so that, for each MCMC draw, the weighted sums of the calibration variables reproduce the corresponding HB posterior totals.  Second, these posterior-calibrated replicate weights are applied to derived cross tabulations to propagate HB posterior uncertainty to the published table cells.

The paper also shows that the inferential status of a cross-classified statistic depends on its relationship to the HB calibration system.  Tier~1-E cells inherit exact posterior credible intervals because they are calibration-supported totals.  Tier~2 cells sum a calibration variable over a filter and require a CBI correction to restore coverage.  Tier~3-NCV cells involve a non-calibration outcome variable, for which a ratio-estimator CBI is proposed by linking the outcome to the most relevant calibration variable.

The empirical results based on Australian 2021 Census microdata show why this distinction matters.  PHIE intervals work well for Tier~1-E cells but can substantially undercover for Tier~2 and Tier~3 cells.  After applying the CBI correction, Monte Carlo coverage is restored to a range broadly consistent with the nominal level for the examples considered.  The resulting CBI-based CVs are also within ranges that would normally be considered publishable in NSO practice.

The practical message is that cross-classified statistics from an HB-calibrated survey should not be treated as ordinary weighted tabulations with no further uncertainty assessment.  If NSOs publish unit record files or cross-tabulated outputs from a mini-max HB survey, they need an accompanying inference engine.  PHIE provides such an engine for calibration-supported cells, and the CBI extension provides a practical correction for more general cross-classified statistics.

\bibliographystyle{plainnat}

\end{document}